\documentclass[journal,twoside,web]{ieeecolor}
 \usepackage{lcsys}
\usepackage{cite}
\usepackage{amsmath,amssymb,amsfonts}
\usepackage{algorithmic}
\usepackage{graphicx}
\let\labelindent\relax
\usepackage{enumitem}
\usepackage{float}
\usepackage{caption}
\usepackage{subcaption}

\usepackage{graphicx,graphics}

\newtheorem{theorem}{Theorem}
\newtheorem{corollary}{Corollary}

\newtheorem{lemma}{Lemma}
\newtheorem{definition}{Definition}

\def\BibTeX{{\rm B\kern-.05em{\sc i\kern-.025em b}\kern-.08em
    T\kern-.1667em\lower.7ex\hbox{E}\kern-.125emX}}
\begin{document}
\title{Analysis and Design of Reset Control Systems via Base Linear Scaled Graphs}
\author{T. de Groot, W.P.M.H. Heemels, and S.J.A.M. van den Eijnden
\thanks{The authors are with the Department of Mechanical Engineering, Eindhoven University of Technology, The Netherlands (email: \{t.d.groot2, m.heemels, s.j.a.m.v.d.eijnden\}@tue.nl).}
\thanks{Research partly supported by the European Union (ERC Advanced Grant, Proacthis, no. 10105538).}}

\maketitle
\thispagestyle{empty} 

\begin{abstract}
In this letter, we prove that under mild conditions, the scaled graph of a reset control system is bounded by the scaled graph of its underlying base linear system, i.e., the system without resets. Building on this new insight, we establish that the negative feedback interconnection of a linear time-invariant plant and a reset controller is stable, if the scaled graphs of the underlying base linear components are strictly separated. This result simplifies reset system analysis, as stability conditions reduce to verifying properties of linear time-invariant systems. We exploit this result to develop a systematic approach for reset control system design. Our framework also accommodates reset systems with time-regularization, which were not addressed in the context of scaled graphs before.
\end{abstract}

\begin{IEEEkeywords}
Scaled Relative Graphs, Reset Control Systems, Stability Analysis, LMIs 
\end{IEEEkeywords}

\section{Introduction}
\label{sec:introduction}
\IEEEPARstart{R}{eset} controllers can overcome fundamental limitations of linear time-invariant (LTI) control systems \cite{Beker01, Zhao19}. Due to the performance benefits reset control systems can offer, they have attracted significant interest over the past decades, leading to substantial progress in their analysis and design; see, e.g., \cite{Clegg58, Loon17, Dastjerdi23, Beker04, Eijnden24, Krebbekx25, Carrasco10, Nesic08}. Despite these developments, a significant challenge remains in deriving intuitive and systematic design methods for reset control systems that align with classical loop-shaping techniques for LTI systems, which still form the core of industrial control engineering today.  

To address this challenge, several works have proposed frequency-domain methods for stability analysis of reset control systems that hinge on connections with the Kalman-Yakubovich-Popov lemma \cite{Rantzer96}. Although effective, these approaches are only applicable to Lur'e-like systems, i.e., the feedback interconnection of an LTI plant and a reset element, and are often tailored toward the specific reset controller at hand \cite{Loon17, Dastjerdi23}. A different approach comes from the recently introduced notion of scaled graphs (SGs) \cite{Huang24, Chaffey23}. The SG of a nonlinear system bears a strong resemblance to the classical Nyquist diagram for LTI systems, and enables graphical Nyquist-like stability tests for general nonlinear feedback interconnections beyond the Lur'e framework. As such, SGs offer a bridge towards graphical loop-shaping methods for reset control system design.

Obtaining the SG of nonlinear systems in general, and reset systems in specific, is not straightforward, as this requires capturing system responses to an infinite (uncountable) number of input signals. Recently, computationally tractable methods for over-approximating SGs of reset systems have been proposed in \cite{Eijnden24, Groot25}. Key in these approaches is the connection between dissipativity, integral quadratic constraints (IQCs) and SGs to formulate the construction of  over-approximations of SGs as linear matrix inequalities (LMIs). These over-approximations are valuable for robust stability and performance analysis, but currently offer limited guidance for controller tuning due to an incomplete understanding in general of how changes in parameters affect the shape of the SG. For LTI systems, on the other hand, such understanding is more developed due to the strong connection between the system's Nyquist diagram and its SG, see \cite[Section V]{Chaffey23}. 

In view of the above, in this paper we aim to exploit SGs of LTI systems for the analysis and design of reset systems. To this end, we make three main contributions. First, we formally show that the SG of a reset system is bounded (in a sense that we will make precise) by the SG of its underlying base linear system (BLS), i.e., the system without resets. It thus follows that reset systems inherit input-output properties of their BLS as reflected in their SGs. This result generalizes the findings in \cite{Carrasco10}, where it was shown that a reset system is input/output/strictly passive, if its BLS is input/output/strictly passive. Second, we use this new result to formulate feedback stability conditions for reset control systems that are entirely based on the SG of the underlying BLS. As the latter can be constructed efficiently from the transfer function of the BLS \cite{Chaffey23}, (robust) stability analysis is simplified. Third, we utilize these analysis conditions for SG-based design of reset control systems. In particular, our reset control design procedure involves shaping the SG of the BLS, connecting loop-shaping with reset control design. We demonstrate the effectiveness of this new perspective through an illustrative example. Our results also allow for reset systems with time-regularization schemes \cite{Nesic08}, often crucial to guarantee well-posed behaviour of the control system. Time-regularized reset systems have not been addressed before in the SG context.

\section{Background and preliminaries}\label{sec:preliminaries}
Let $\mathbb{R}$ and $\mathbb{R}_{\geq 0}$ denote the sets of real numbers and non-negative real numbers, $\mathbb{C}$ the set of complex numbers, and $\mathbb{C}_\infty = \mathbb{C}\cup \left\{\infty\right\}$. For $z\in \mathbb{C}$, we denote its real and imaginary parts by $\textup{Re}\left\{z\right\}$ and $\textup{Im}\left\{z\right\}$, its magnitude by $|z|$, and its complex conjugate by $\bar{z}$. The inverse is defined as $z^\dagger:=\bar{z}/|z|^2$ if $z\neq 0$, and $z^\dagger = \infty$ if $z=0$. The closure of a set ${S}\subseteq \mathbb{C}_\infty$ is denoted by $\overline{{S}}$, and its complement by $S^c:=\mathbb{C}_\infty \setminus S$. A set $S\subseteq X \subseteq\mathbb{C}$ is an unbounded component of $X$ if: 1) $S$ is unbounded, 2) $S$ is connected, 3) $A\subseteq S$ and $A$ connected implies $A=S$. If $S$ is bounded, $S^c$ has one unbounded component, which we denote $(S^c)_\infty$. We refer to \cite{Conway1995} for details. The inverse of a set $S \subseteq \mathbb{C}_\infty$ is given by $S^\dagger = \left\{z^\dagger \mid z \in S\right\}$. The distance between two sets ${S}_1,S_2\subseteq \mathbb{C}_\infty$ is denoted by $\textup{dist}({S}_1,{S}_2):=\inf_{a\in {S}_1,\;b\in{S}_2}|a-b|$, where $|\infty-\infty|:=0$. 
The $n\times n$ identity matrix is denoted $I_n$. The set of $m\times m$ real symmetric matrices is denoted  $\mathbb{S}^{m}:=\{P\in\mathbb{R}^{m\times m}\mid P=P^\top\}$. We adopt standard convention for positive/negative (semi-)definite matrices and denote $\mathbb{S}_{\succeq 0}^m:=\left\{P\in \mathbb{S}^m\mid P\succeq 0\right\}$.

\subsection{Signal spaces, systems, and stability}
The set of square-integrable functions is denoted by 
$$\mathcal{L}_2^n = \left\{u: \mathbb{R}_{\geq 0}\to\mathbb{R}^n \mid \int_{0}^\infty u(t)^\top u(t) dt < \infty\right\}$$ 
and equipped with the inner product and norm $$\langle u,y\rangle = \int_{0}^\infty u(t)^\top y(t) dt \:\textup{ and }\:\|u\| = \sqrt{\langle u,u\rangle}.$$
We define the extended $\mathcal{L}_2$-space as
\begin{equation*}
      \mathcal{L}_{2e}^n = \left\{u:\mathbb{R}_{\geq 0} \to \mathbb{R}^n \mid P_Tu \in \mathcal{L}_2^n \ \ \textup{for all } T > 0\right\},
\end{equation*}
where for $T\geq 0$, $P_T$ is the truncation operator on a signal $u:\mathbb{R}_{\geq 0}\to \mathbb{R}^n$ such that $P_Tu :\mathbb{R}_{\geq 0}\to \mathbb{R}^n$ with $(P_Tu)(t)=u(t)$ when $t\in [0,T]$, and $(P_Tu)(t)=0$ when $t> T$. We write $u=0$ to mean the signal $u$ is zero almost everywhere.

We consider (possibly multi-valued) systems $H$ that map inputs $u\in \mathcal{L}_{2e}^n$ to outputs $y \in \mathcal{L}_{2e}^n$. We will only consider \emph{square} systems, i.e., systems with an equal number of inputs and outputs. Note that non-square systems can be treated by patching them with zero maps to make them square. As a shorthand notation, we write $y\in H(u)$ to indicate all possible outputs related to the input $u$ when applied to the system $H$. We assume that the system uniquely maps the zero input to zero output signals, i.e., $H(0)=\{0\}$. A system is said to be causal if $P_Ty \in P_TH(P_Tu)$ for all $T\geq 0$, all $u \in \mathcal{L}_2^n$ and all $y\in H(u)$. We will adopt the following notions of stability. 

\begin{definition}\label{def:stab}
    A causal system $H:\mathcal{L}_{2e}^n \rightrightarrows \mathcal{L}_{2e}^n$ is: 
    \begin{itemize}
        \item stable if $H(u) \in \mathcal{L}_2^n$ for all $u \in \mathcal{L}_2^n$;
        \item bounded if it is stable and 
        \begin{equation*}
       \|H\|:=\sup_{0\neq u\in \mathcal{L}_2^n} \sup_{y\in H(u)} \frac{\|y\|}{\|u\|}<\infty.
    \end{equation*}
    \end{itemize}
    We will refer to $\|H\|$ as the $\mathcal{L}_2$-gain of $H$.  
\end{definition}

A system $H$ is said to be linear if $H(a u_1+u_2) = a H(u_1)+H(u_2)$ for all $u_1,u_2\in \mathcal{L}_{2e}^n$, and $a \in \mathbb{R}$. It is said to be time-invariant if $H(S_T u)=S_T H(u)$ for all $u\in \mathcal{L}_{2e}^n$, $T\geq 0$,  and where $S_T$ is the time shift operator on a signal $u:\mathbb{R}_{\geq 0}\to \mathbb{R}^n$ such that $S_T :\mathbb{R}_{\geq 0} \to \mathbb{R}^n$ with $(S_Tu)(t)=u(t-T)$ if $t \geq T$, and $(S_T u)(t)=0$ otherwise. We consider causal linear time-invariant (LTI) systems  represented in state-space form as
\begin{equation}\label{eq:LTI}
    \mathcal{G}:\begin{cases}
        \dot{x}=Ax+Bu, \quad x(0)=0,\\
        y=Cx+Du,
    \end{cases}
\end{equation}
with state $x(t)\in \mathbb{R}^m$, input $u(t)\in \mathbb{R}^n$, output $y(t)\in \mathbb{R}^n$ at time $t \in \mathbb{R}_{\geq 0}$, and matrices $A,B,C,D$ of appropriate sizes. Equivalently, $\mathcal{G}$ can be represented in the Laplace domain (with some abuse of notation) by the transfer function
\begin{equation*}
     \mathcal{G}(s) = C(sI-A)^{-1}B+D,
\end{equation*}
with $s \in \mathbb{C}$. We call an LTI system normal, if its transfer function satisfies $\overline{ \mathcal{G}}(s) \mathcal{G}(s)= \mathcal{G}(s)\overline{ \mathcal{G}}(s)$ for al $s\in\mathbb{C}$. 

\subsection{Scaled graphs and feedback analysis}

Given signals $u\in \mathcal{L}_2^n$ and $y\in \mathcal{L}_2^n$, we define the gain $\rho(u,y)$ from $u$ to $y$ by
$
    \rho(u,y):=  \|y\|/\|u\|
$
if $u \neq 0$, $\rho(u,y)=0$ if $u,y=0$ and $\rho(u,y)=\infty$ if $u = 0, y\neq 0$. The phase $\theta(u,y)$ from $u$ to $y$ is defined by
$
    \theta(u,y):=  \arccos \frac{\langle u,y\rangle}{\|u\|\|y\|}
$
if $u,y\neq0$ and $\theta(u,y) = 0$ otherwise.
Likewise, the gain and phase from signal $u \in \mathcal{L}_{2e}^n$ to signal $y\in \mathcal{L}_{2e}^n$  for window length $T> 0$ are defined by
\begin{equation}
    \rho_T(u,y):=\!\rho(P_Tu,P_Ty) \textup{ and } \theta_T(u,y):=\!\theta(P_Tu,P_Ty).
\end{equation}
 Equipped with these notions for gain and phase, we define soft and hard scaled graphs.

\begin{definition}

    \noindent i)  For a causal and bounded system $H:\mathcal{L}_{2}^n \rightrightarrows\mathcal{L}_{2}^n$ the \emph{soft Scaled Graph} is defined as
\begin{equation*}
    \textup{SG}(H):=\big\{\rho(u,y)e^{\pm j\theta(u,y)}\mid u\in \mathcal{L}_2^n,\;y\in H(u)\big\}.
\end{equation*}
ii) For a causal system $H:\mathcal{L}_{2e}^n \rightrightarrows\mathcal{L}_{2e}^n$, the \emph{hard Scaled Graph} is defined as
\begin{equation*}
   \textup{SG}_{e}(H)\!:=\!\big\{\rho_T(u, y)e^{\pm j \theta_T(u,y)}\mid 
    \! u\! \in\! \mathcal{L}_{2e}^n, \; \!y\in \! H(u),T\!>\! 0\big\}.
\end{equation*}
\vspace{-0.4cm}
\end{definition}

%\subsection{Feedback analysis}
Scaled graphs offer an elegant means for \emph{graphical} analysis of feedback systems in Fig.~\ref{fig:FB}. Here, $H_1,H_2:\mathcal{L}_{2e}^n\rightrightarrows\mathcal{L}_{2e}^n$ are causal systems, $w \in \mathcal{L}_{2}^n$ is an external signal, and $u_1, u_2, y_1, y_2$ are internal signals. We denote the closed-loop mapping from $w$ to $y_1$ by $\Sigma$. We call this system well-posed if for all $w \in \mathcal{L}_2^n$, there exist $u_1, u_2 \in \mathcal{L}_{2e}^n$ satisfying $u_1\in \{w\}-H_2(u_2)=\left\{w-y_2 \mid y_2\in H_2(u_2)\right\}$ and $u_2\in H_1(u_1)$.

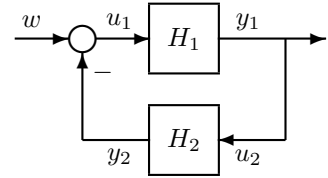
\begin{figure}[htbt!]
\centering
\setlength{\unitlength}{0.9mm}
\begin{picture}(50,25)
\thicklines \put(0,20){\vector(1,0){8}} \put(10,20){\circle{4}}
\put(12,20){\vector(1,0){8}} \put(20,15){\framebox(10,10){$H_1$}}
\put(30,20){\vector(1,0){16}} \put(40,20){\line(0,-1){15}}
\put(40,5){\vector(-1,0){10}}  
 \put(20,0){\framebox(10,10){$H_2$}}
\put(20,5){\line(-1,0){10}} \put(10,5){\vector(0,1){13}}
\put(13,0){\makebox(5,5){$y_2$}} \put(32,20){\makebox(5,5){$y_1$}}
\put(0,20){\makebox(5,5){$w$}}  
\put(13,20){\makebox(5,5){$u_1$}} \put(32,0){\makebox(5,5){$u_2$}}
\put(10,10){\makebox(6,10){$-$}}
\end{picture} 
  \caption{Feedback interconnection $\Sigma$.}
        \label{fig:FB}
\end{figure}  

\begin{theorem}[\cite{Chen25}]\label{thm:stab-combined}
Let $H_1, H_2 : \mathcal{L}_{2e}^n \rightrightarrows \mathcal{L}_{2e}^n$ be causal, stable systems in negative feedback interconnection. Suppose the interconnection of $H_1$ and $\mu H_2$ is well-posed for all $\mu \in (0,1]$. If there exists $r > 0$ such that for all $\mu \in (0,1]$,
\begin{equation}
     \textup{dist}(\textup{SG}^\dagger(-H_1), \textup{SG}(\mu H_2)) \geq r,
\end{equation}
     then $\Sigma$ is stable. If, in addition, one or both of $\textup{SG}(H_1)$ and $\textup{SG}(H_2)$ satisfy the chord property\footnote{The SG of a system $H$ is said to satisfy the chord property if, for each $z \in \textup{SG}(H)$, $\lambda z+(1-\lambda )\bar{z} \in \textup{SG}(H)$ with $\lambda \in [0,1]$, see \cite{Chaffey23}.}, then $\|\Sigma\| \leq 1/r$. \hfill $\square$
\end{theorem}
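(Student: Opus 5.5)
The argument is a homotopy in the loop gain $\mu$. I would use the uniform separation to obtain a uniform incremental-type inequality in the signals of each loop, and then pass to stability by continuity in $\mu$.

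\textbf{Step 1 (geometric bound).} Fix $\mu\in(0,1]$ and signals $u_1,u_2\in\mathcal{L}_2^n$ with $y_1\in H_1(u_1)$, $y_2\in\mu H_2(u_2)$. The point $z_2=\rho(u_2,y_2)e^{\pm j\theta(u_2,y_2)}$ lies in $\textup{SG}(\mu H_2)$. The pair $(y_1,-u_1)$ is an input--output pair of $(-H_1)^{-1}$, so the point $z_1$ built from the gain and phase of $(y_1,-u_1)$ lies in $\textup{SG}^\dagger(-H_1)$. From $\textup{dist}\geq r$ I would derive the signal-level bound
\begin{equation*}
\|u_1+y_2\|\;\geq\; r\,\|y_1\| \quad\text{whenever } u_2=y_1 .
\end{equation*}
This is the standard SRG argument: in the two-dimensional subspace spanned by $y_1$ and the relevant output, the inner products reproduce the complex-plane distances. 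I would follow \cite{Chaffey23}, taking care with the $\pm$ conjugate choice and with the points $0$ and $\infty$.

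\textbf{Step 2 (truncations and homotopy).} Closed-loop signals are only in $\mathcal{L}_{2e}$, while $\textup{SG}$ is defined for $\mathcal{L}_2$ signals. I would therefore use causality and stability of $H_1,H_2$ to work with truncated signals $P_T(\cdot)$. For small $\mu$, the small-gain regime gives boundedness of the loop. I would then argue as in Megretski--Rantzer IQC homotopy proofs. If the loop with $\mu$ is stable, Step 1 yields the uniform estimate $\|y_1\|\leq\|w\|/r$ with $w=u_1+y_2$, independent of $\mu$. The bounded gain of $H_2$ makes $\mu\mapsto\mu H_2$ Lipschitz, so stability extends to $\mu+\delta$ for a $\delta$ independent of $\mu$ (a perturbation/small-gain step), and finitely many increments reach $\mu=1$. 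Well-posedness for all $\mu$ guarantees that solutions exist at each stage.

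\textbf{Step 3 (gain bound).} The chord property is what makes the two-dimensional argument in Step 1 yield $\|y_1\|\leq\|w\|/r$ directly. Without it, the separation between the sets does not transfer cleanly to the norm inequality, and one only obtains some finite gain, which suffices for stability. So I would establish the $1/r$ bound under the chord property, and otherwise settle for stability with a possibly larger constant.

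The main obstacle is Step 2. Soft SGs only describe $\mathcal{L}_2$ pairs, so justifying the homotopy needs finite-time truncation together with causality. Showing that the $\mu$-increment $\delta$ is uniform is the delicate part; this is where the uniform $r$ is used.
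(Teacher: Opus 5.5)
The paper gives no proof of this theorem; it quotes it from \cite{Chen25}. Your architecture is the standard one: a signal inequality obtained from the separation, followed by a Megretski--Rantzer homotopy in $\mu$. However, Step~2 omits the one ingredient that makes the homotopy work, and Step~3 contradicts Step~1.

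\textbf{The gap in Step~2.} At level $\mu+\delta$ you write $y_2=(\mu+\delta)\eta$ with $\eta\in H_2(y_1)$, and view the loop as the $\mu$-loop driven by $w'=w-\delta\eta$. A priori $\eta$ is only in $\mathcal{L}_{2e}^n$. So closing the estimate requires the \emph{truncated} closed-loop bound $\|P_Ty_1\|\le \tfrac1r\|P_Tw'\|$ at level $\mu$, for all $T>0$ and all $w'\in\mathcal{L}_{2e}^n$. Given that bound, $\|P_Ty_1\|\le \tfrac1r\|w\|+\tfrac{\delta\|H_2\|}{r}\|P_Ty_1\|$, so every $\delta<r/\|H_2\|$ works. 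Uniformity of the increment is therefore trivial, not delicate.

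The truncated bound is what you never establish, and it does not follow from the tools you name. Step~1 constrains only complete $\mathcal{L}_2$ input--output pairs, and $(P_Ty_1,P_Ty_2)$ is not one. Suppose you try to build an $\mathcal{L}_2$ loop solution that agrees with the given one on $[0,T]$. Causality gives $\tilde y_1\in H_1(P_Tu_1)$ with $P_T\tilde y_1=P_Ty_1$, but $\tilde y_1$ has a tail on $(T,\infty)$. The response of $\mu H_2$ to $\tilde y_1$ is then not tied to $y_2$ after $T$; for multivalued $H_2$ it is not tied even on $[0,T]$. The resulting external input is controlled only through $\mu\|H_2\|\,\|\tilde y_1\|$, which puts you back in the small-gain regime.

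What you need is causality of the closed-loop solution relation at each level $\mu$, so that the $\mathcal{L}_2$ bound $\|y_1\|\le\|w'\|/r$ transfers to truncations. Megretski and Rantzer build this into their definition of well-posedness (a causal closed-loop inverse on $\mathcal{L}_{2e}$). The existence-only well-posedness used here does not supply it, so you must assume or prove such a property. Without it you cannot reset the truncated gain to $1/r$ after each increment. The constants then compound as $1/\gamma_{k+1}=1/\gamma_k-\delta\|H_2\|$, and the iteration need not reach $\mu=1$.

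\textbf{Step~3 versus Step~1.} For $y_1\neq0$ put $z_1=\rho(y_1,-u_1)e^{j\theta(y_1,-u_1)}\in\textup{SG}^\dagger(-H_1)$ and $z_2=\rho(y_1,y_2)e^{j\theta(y_1,y_2)}\in\textup{SG}(\mu H_2)$. The point $z_1$ comes from the pair $(-y_1,u_1)$ of $(-H_1)^{-1}$; your pair $(y_1,-u_1)$ has a sign slip but gives the same point. Split every signal into its component along $y_1$ and its component $(\cdot)_\perp$ orthogonal to $y_1$. The reverse triangle inequality on the orthogonal parts then gives
\[
\|u_1+y_2\|^2=\|y_1\|^2(\textup{Re}\,z_2-\textup{Re}\,z_1)^2+\|(y_2)_\perp+(u_1)_\perp\|^2\ge\|y_1\|^2|z_2-z_1|^2\ge r^2\|y_1\|^2 .
\]
So Step~1 is correct, but it is an inequality, not an exact reproduction of distances in a two-dimensional subspace: $u_1,y_1,y_2$ generally span three dimensions. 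It also uses no chord property.

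Consequently, once stability is known, $\|\Sigma\|\le 1/r$ follows directly at $\mu=1$. Your claim in Step~3 that without the chord property one only gets ``some finite gain'' is wrong, and Step~3 gives no argument for where the chord property would enter. The stated theorem remains true; the chord hypothesis is simply unused.

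Finally, the small-gain start needs $\|H_1\|<\infty$, and you should note that this follows from the hypothesis. Since $0\in\textup{SG}(\mu H_2)$, the set $\textup{SG}^\dagger(-H_1)$ avoids the open disk of radius $r$, so $\|H_1\|\le 1/r$.
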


\noindent A few comments regarding Theorem~\ref{thm:stab-combined} are in order:
\begin{enumerate}
    \item The chord property enables performance guarantees for the closed-loop system $\Sigma$ in terms of its $\mathcal{L}_2$-gain. It distinguishes conditions for stability from those for performance, as the former do not require the chord property.
    \item An analogous result can be formulated using hard SGs. In this case, unbounded systems are allowed, and the homotopy condition ``for all $\mu\in(0,1]$'' can be removed ($\mu=1$ is sufficient), see \cite{Chen25} for details on hard SGs. 
\end{enumerate}

\section{SGs and reset systems}\label{sec:SRG_and_reset}

\subsection{Reset systems}
The class of systems that is considered in this paper is formed by reset systems with time-regularization given by
\begin{equation}\label{eq:RT}
    \mathcal{R}:\begin{cases}
        \:\:\:\dot{x} = Ax+Bu,\:\: \dot{\tau} = 1, &\textup{ if } q\in \mathcal{F} \textup{ or } \tau \leq \delta, \\
        x^+ = Rx, \quad \quad\:\, \tau^+=0, &\textup{ if } q\in \mathcal{J} \textup{ and } \tau \geq \delta, \\
        \:\:\:y =Cx+Du, 
    \end{cases}
\end{equation}

with states $x(t)\in\mathbb{R}^m$, $\tau(t) \geq 0$, $x(0)=0$, $\tau(0)=0$, threshold $\delta >0$, input $u(t)\in\mathbb{R}^n$, output $y(t)\in \mathbb{R}^n$, and $q(t) = [x^\top(t), u{^\top}(t)]^\top \in \mathbb{R}^{m+n}$ all at time $t \in \mathbb{R}_{\geq 0}$, and where $x^+(t)= \lim_{s \downarrow t}x(s)$. The matrix $R \in \mathbb{R}^{m\times m}$ defines the reset map, and the jump and flow sets are given by
\begin{subequations}\label{eq:FJ}
\begin{align}
	\mathcal{F} &= \{v \in \mathbb{R}^{m+n} \;|\; v^TMv\geq 0\} \label{eq:canonical_flow_set_reset},\\
	\mathcal{J} &= \{v \in \mathbb{R}^{m+n} \;|\; v^TMv\leq 0\} \label{eq:canonical_jump_set_reset},
\end{align}
\end{subequations}
with $M\in\mathbb{S}^{m+n}$. Variable $\tau$ is a ``timer'' that ensures that the reset times, denoted by $t_0,t_1,\dots$ satisfy $t_{i+1}-t_i\geq \delta$ for all $i=0,1,\ldots$ and solutions in the sense of \cite{Heemels26} are well-defined.

We associate with $\mathcal{R}$ in \eqref{eq:RT} a base linear system (BLS) given by the tuple $(A,B,C,D)$, i.e., an LTI system as in \eqref{eq:LTI} without resets (and timer). We denote this system by $\mathcal{R}_{\textup{BLS}}$. As we will show, certain reset systems inherit structural properties from the SG of their underlying BLS. 

\subsection{SGs for LTI systems}
Before connecting reset SGs to their BLS, we require some additional results for LTI systems. 

\begin{lemma}[\cite{Groot25}]
	\label{th:LTI_approx}
Consider an LTI system $H$ of the form (\ref{eq:LTI}) and assume that $A$ is Hurwitz. Let 
\begin{align} \label{eq:thetaPi}
\Theta(\Pi) = \begin{bmatrix}
		C & D \\ 0 & I_n
	\end{bmatrix}^\top(\Pi\otimes I_n)\begin{bmatrix}
		C & D \\ 0 & I_n
	\end{bmatrix}, 
\end{align} 
with $\Pi \in \mathbb{S}^2$ and $\det(\Pi)<0$, and where $\otimes$ is the Kronecker product. Define the region $\mathcal{S}(\Pi)$ as 
\begin{equation}\label{eq:SP}
    \mathcal{S}(\Pi) = \left\{z \in \mathbb{C} \mid \begin{bmatrix}
        \bar{z} \\1
    \end{bmatrix}^\top\Pi\begin{bmatrix}
        z \\1
    \end{bmatrix} \geq 0\right\}.
\end{equation}
If there exists a matrix $P\in \mathbb{S}^m$ that satisfies the LMI
	\begin{align} \label{eq:KYP_LMI}
		 \begin{bmatrix}
			A^\top P+PA &PB \\
            B^\top P & 0\end{bmatrix} - \Theta(\Pi) &\preceq 0, 
    \end{align} 
then $\textup{SG}(H) \subset \mathcal{S}(\Pi)$. If $P\in\mathbb{S}_{\succeq 0}^m$, then $\textup{SG}_e(H) \subset \mathcal{S}(\Pi)$. \hfill $\square$
\end{lemma}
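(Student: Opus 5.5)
We prove Lemma~\ref{th:LTI_approx} through a standard dissipation-inequality argument. First we recast membership in $\mathcal{S}(\Pi)$ as a quadratic constraint on the pair $(u,y)$. Write $\Pi=\begin{bmatrix}\pi_{11}&\pi_{12}\\ \pi_{12}&\pi_{22}\end{bmatrix}$. For signals $u,y$ with $u\neq 0$, set $z=\rho(u,y)e^{\pm j\theta(u,y)}$. Then $|z|^2=\|y\|^2/\|u\|^2$ and $\textup{Re}\{z\}=\langle u,y\rangle/\|u\|^2$, and these do not depend on the sign in $\pm$. Hence
\begin{equation*}
\begin{bmatrix}\bar z\\1\end{bmatrix}^\top\Pi\begin{bmatrix}z\\1\end{bmatrix}\|u\|^2=\pi_{11}\|y\|^2+2\pi_{12}\langle y,u\rangle+\pi_{22}\|u\|^2
=\int_0^\infty \begin{bmatrix}y\\u\end{bmatrix}^\top(\Pi\otimes I_n)\begin{bmatrix}y\\u\end{bmatrix}dt .
\end{equation*}
The same identity holds with $P_Tu,P_Ty$ and integration over $[0,T]$. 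So it suffices to show that this integral is nonnegative for every admissible input--output pair. Since $y=Cx+Du$, the integrand equals $\begin{bmatrix}x\\u\end{bmatrix}^\top\Theta(\Pi)\begin{bmatrix}x\\u\end{bmatrix}$ by \eqref{eq:thetaPi}.

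Next we multiply \eqref{eq:KYP_LMI} from both sides by $[x(t)^\top,u(t)^\top]$ along a trajectory of \eqref{eq:LTI}. This gives the pointwise dissipation inequality
\begin{equation*}
\tfrac{d}{dt}\big(x^\top Px\big)\le \begin{bmatrix}x\\u\end{bmatrix}^\top\Theta(\Pi)\begin{bmatrix}x\\u\end{bmatrix}.
\end{equation*}
Integrating over $[0,T]$ with $x(0)=0$ yields $x(T)^\top Px(T)\le\int_0^T(\cdot)\,dt$.

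For the soft SG, take $u\in\mathcal{L}_2^n$. Because $A$ is Hurwitz, $x\in\mathcal{L}_2$ and $\dot x\in\mathcal{L}_2$, so $x(T)\to0$ as $T\to\infty$. Letting $T\to\infty$ gives $\int_0^\infty(\cdot)\,dt\ge0$, hence $z\in\mathcal{S}(\Pi)$. The choice of sign in $\pm$ is irrelevant, since the expression depends only on $|z|$ and $\textup{Re}\{z\}$.

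For the hard SG, take $u\in\mathcal{L}_{2e}^n$ and $T>0$. By causality, $P_Ty=P_T\mathcal{G}(P_Tu)$, so the finite-horizon integral equals $\int_0^T(\cdot)\,dt\ge x(T)^\top Px(T)\ge0$ when $P\succeq0$. This gives $\textup{SG}_e(H)\subset\mathcal{S}(\Pi)$.

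The main obstacle is the degenerate cases in the definitions of $\rho$ and $\theta$. The case $u=0$ gives $y=0$, so $z=0$, and we must check that $0\in\mathcal{S}(\Pi)$, which requires $\pi_{22}\ge0$. The lower-right block of \eqref{eq:KYP_LMI} reads $-(D^\top D\pi_{11}+\pi_{12}(D+D^\top)+\pi_{22}I)\preceq0$. This is the same inequality as $\mathcal{S}(\Pi)$ evaluated at constant inputs, but it does not immediately yield $\pi_{22}\ge0$. We will therefore treat $u=0$ by the convention $\rho=0$, $\theta=0$, and verify it either directly from \eqref{eq:KYP_LMI} or by noting that such pairs are excluded. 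The cases $y=0$ with $u\neq0$ are covered by the identity above, since then $z=0$ and the integral reduces to $\pi_{22}\|u\|^2\ge0$. The condition $\det(\Pi)<0$ only serves to make $\mathcal{S}(\Pi)$ a disk, half-plane, or disk complement, and is not needed in the inequality chain itself.
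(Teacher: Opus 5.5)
Your dissipation-inequality argument is correct for every pair with $u\neq 0$ (soft SG) or $P_Tu\neq 0$ (hard SG). It is the standard storage-function/IQC route on which the cited result rests; the paper gives no proof of this lemma and only invokes \cite{Groot25}. The following steps are all right:
the identity $\|u\|^2(\pi_{11}|z|^2+2\pi_{12}\textup{Re}\{z\}+\pi_{22})=\pi_{11}\|y\|^2+2\pi_{12}\langle u,y\rangle+\pi_{22}\|u\|^2$; the limit $x(T)\to 0$ from $x,\dot x\in\mathcal{L}_2$, so that no sign condition on $P$ is needed for the soft SG; and the bound $x(T)^\top P x(T)\ge 0$ on finite windows for the hard SG. The case $y=0$, $u\neq 0$ is also covered, as you say.

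What you must not leave open is the trivial pair $u=0$. Your first option, deriving $\pi_{22}\ge 0$ from the LMI, is impossible. Take $m=n=1$, $A=-1$, $B=C=0$, $D=1$ (the identity), and $\Pi=\Pi(-1,1,\tfrac12)$, the closed disk of radius $\tfrac12$ centred at $1$. Then $\det\Pi=-\tfrac14<0$, $\Theta(\Pi)=\textup{diag}(0,\tfrac14)$, and the LMI holds with $P=0\succeq 0$. Yet $\pi_{22}=-\tfrac34$, so $0\notin\mathcal{S}(\Pi)$. Under the paper's literal conventions $\rho(0,0)=0$ and $\theta(0,0)=0$, the pair $(0,0)$ nevertheless puts $0$ into both $\textup{SG}(H)$ and $\textup{SG}_e(H)$.

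The same failure occurs for the controllable system $1+\epsilon/(s+1)$ with $|\epsilon|\le\tfrac12$, by the KYP lemma. For this disk any feasible $P$ is automatically positive semidefinite, so both claims fail there. The statement is therefore only true if scaled graphs are generated by pairs with $u\neq 0$ (respectively $P_Tu\neq 0$), as in the usual SRG definition. This is also the reading needed for the equality $\overline{\textup{SG}}(H)=\bigcap_{\Pi\in\mathbf{\Pi}(H,\mathbb{R})}\mathcal{S}(\Pi)$ quoted in the appendix. State that convention explicitly and drop the alternative; with it, your proof is complete.
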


Lemma~\ref{th:LTI_approx} provides an algorithmic way for computing over-approximations $\mathcal{S}(\Pi)$ in \eqref{eq:SP} of $\textup{SG}(H)$ and $\textup{SG}_e(H)$. These over-approximations are characterized by the matrix $\Pi$, for which a parametrization is suggested in \cite{Groot25} as
\begin{equation}\label{eq:PIS}
    \Pi(\sigma, \lambda, r) = \sigma\begin{bmatrix}
        1 & -\lambda \\
        -\lambda & \lambda^2-r^2
    \end{bmatrix}, 
\end{equation}
with $\sigma \in \left\{-1, +1\right\}$, $\lambda \in \Lambda \subseteq\mathbb{R}$, and $r >0$. Note that $\det( \Pi (\sigma, \lambda, r)) = -r^2<0$, and $\mathcal{S}(\Pi(-1,\lambda,r))$ is the \emph{interior} of a disk in the complex plane centered at $\lambda$ with radius $r$, and $\mathcal{S}(\Pi(+1,\lambda,r))$ is the \emph{exterior} of such a disk. 

Taking intersections of these regions, we find from application of Lemma~\ref{th:LTI_approx} the SG over-approximations
\begin{align} \label{eq:over_app}
\textup{SG}(H) \subseteq \!\!\!\!\!\! \bigcap_{\Pi \in \mathbf{\Pi}(H,\Lambda)}\!\!\!\!\!\! \mathcal{S}(\Pi),\:\:\text{ and } \:\:\textup{SG}_e(H) \subseteq \!\!\!\!\!\! \bigcap_{\Pi \in \mathbf{\Pi}_{\succeq 0}(H,\Lambda)} \!\!\!\!\!\! \mathcal{S}(\Pi)
\end{align} 
for all $\Lambda \subset \mathbb{R}$, where
\begin{equation}\label{eq:Pi0}
 \begin{split} &  \mathbf{\Pi}(H,\Lambda)
=
\left\{
\Pi \text{ in } \eqref{eq:PIS}\mid \lambda \in \Lambda, \sigma \in \left\{-1, +1\right\}, \right.\\
&\quad\qquad \qquad \qquad \left.
r>0, \exists\, P \in \mathbb{S}^m \text{ satisfying } \eqref{eq:KYP_LMI}
\right\},
\end{split}
\end{equation}
and
\begin{equation} \label{eq:Pigeq0}
    \begin{split}
        & \mathbf{\Pi}_{\succeq 0}(H,\Lambda)  \!=\!
\left\{
\Pi \text{ in } \eqref{eq:PIS}\mid  \lambda \in \Lambda, \sigma \in \left\{-1, +1\right\}, \right. \\
&\qquad \qquad \qquad \qquad \left.r>0, \exists P \! \in \mathbb{S}_{\succeq0}^m \text{ satisfying } \eqref{eq:KYP_LMI}
\right\}.
\end{split}
\end{equation}
It follows from \cite[Theorem 12]{Groot25} that when $\Lambda = \mathbb{R}$, the first inclusion in \eqref{eq:over_app} reduces to an equality for the closure of the SG of normal, controllable LTI systems. For further details on computational aspects, the reader is referred to \cite{Groot25}.

 \subsection{Properties inherited from base linear SGs}
We will now work toward our main result: connecting the SG of a reset system in \eqref{eq:RT} to that of its BLS. To this end, we first introduce the concept of a patched set, needed to formalize the notion of one SG being bounded by another SG. 
\begin{definition}\label{def:patch}
    Let $S\subseteq \mathbb{C}_\infty$ be closed and bounded. Then,
    \begin{equation}
      \operatorname{patch}({S})
=
{S} \cup
\left(S^c \setminus (S^c)_{\infty}\right).
    \end{equation}
\end{definition}
\vspace{0.2cm}

Intuitively, in the context of soft and hard SGs, the patch of a set in Definition~\ref{def:patch} serves to define a ``filled'' SG that keeps non-convexities of the SG's outer shape. An example of $\textup{SG}(H)$ and $ \operatorname{patch}(\textup{SG}(H))$ is shown in Fig.~\ref{fig:hulla}.

\begin{figure}[b]
	\centering
	\includegraphics[width = 0.7\linewidth]{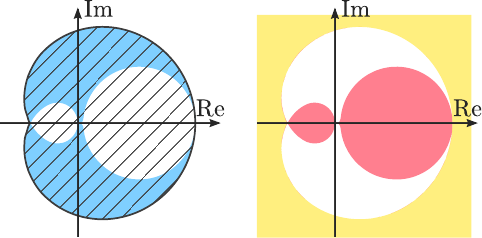}
	\caption{Left: $\textup{SG}$ (blue) and  $\textup{patch}(\textup{SG})$ (hatched region). Right: $(\textup{SG}^c)_{\infty}$ (yellow) and $\left(\textup{SG}^c \setminus (\textup{SG}^c)_{\infty}\right)$ (red).}
	\label{fig:hulla}
\end{figure}

We next state the main theorem of this section, linking the SG of a reset system to the SG of its BLS.

\begin{theorem}\label{th:reset_1}
    Consider the reset system $\mathcal{R}$ in \eqref{eq:RT}, with a base linear system $\mathcal{R}_\textup{BLS}$ that is normal, $A$ Hurwitz and $(A,B)$ controllable. Suppose that for each $P \in \mathbf{P}(\mathcal{R}_{\textup{BLS}}, \mathbb{R})$ with 
    \begin{equation*}
    \mathbf{P}(H,\Lambda)\! = \! \left\{\!P\!\in\! \mathbb{S}_{\succeq 0}^m \!\mid \exists \Pi  \in  \mathbf{\Pi}_{\succeq 0}(H,\Lambda) \textup{ satisfying }\eqref{eq:KYP_LMI}\right\},
\end{equation*}   
    there exists $\rho \geq 0$ so that the reset map $R$ satisfies 
        \begin{equation}\label{eq:RPR}
       \begin{bmatrix} P & 0 & R^\top P^{1/2}\\
       0 & 0 & 0 \\
       P^{1/2} R & 0 & 0\end{bmatrix}+\begin{bmatrix}
           \rho M & 0 \\
           0 & I_m
       \end{bmatrix}\succeq 0.
    \end{equation}
Then, the following inclusions hold:
    \begin{align*}
        \textup{SG}(\mathcal{R}) \subseteq \textup{patch}(\overline{\textup{SG}}(\mathcal{R}_{\textup{BLS}})), \;
    \textup{SG}_e(\mathcal{R})\subseteq \textup{patch}(\overline{\textup{SG}}(\mathcal{R}_{\textup{BLS}})). 
    \end{align*}
    \end{theorem}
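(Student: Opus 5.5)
The plan is to show that every disk-type over-approximation $\mathcal{S}(\Pi)$ certified for the BLS by Lemma~\ref{th:LTI_approx} with some $P\succeq 0$ is also valid for the reset system $\mathcal{R}$. Then the intersection over all such $\Pi$ is identified with $\textup{patch}(\overline{\textup{SG}}(\mathcal{R}_{\textup{BLS}}))$.

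\textbf{Step 1 (dissipation along flows).} Fix $\Pi\in\mathbf{\Pi}_{\succeq 0}(\mathcal{R}_{\textup{BLS}},\mathbb{R})$ with associated $P\in\mathbf{P}(\mathcal{R}_{\textup{BLS}},\mathbb{R})$. Take $V(x)=x^\top P x\geq 0$. Multiplying \eqref{eq:KYP_LMI} by $[x^\top,u^\top]$ on both sides gives, on every flow interval,
\[
\dot V\leq \begin{bmatrix} y\\ u\end{bmatrix}^\top(\Pi\otimes I_n)\begin{bmatrix} y\\ u\end{bmatrix}.
\]
This holds because $\mathcal{R}$ shares $(A,B,C,D)$ with the BLS.

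\textbf{Step 2 (non-increase at resets).} Condition \eqref{eq:RPR} is a Schur-complement/S-procedure encoding of $x^\top R^\top P R x\leq x^\top P x$ whenever $q^\top M q\leq 0$. Concretely, the lower-right block $I_m$ lets us take the Schur complement: \eqref{eq:RPR} is equivalent to
\[
\textup{diag}(P,0)+\rho M-[R^\top P R\;\;0;\,0\;\;0]\succeq 0.
\]
Evaluating this at $q$ with $q\in\mathcal{J}$ and using $\rho\geq 0$ yields $V(Rx)\leq V(x)$. The timer plays no role here, since it only restricts when jumps occur. It does, however, guarantee that solutions are well-defined with finitely many jumps on each $[0,T]$.

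\textbf{Step 3 (integration).} Integrate over $[0,T]$ with $V(x(0))=0$ and $V\geq 0$. This gives $0\leq \int_0^T [y;u]^\top(\Pi\otimes I_n)[y;u]\,dt$ for every $u\in\mathcal{L}_{2e}$ and every $T>0$. By the standard argument that connects such integral quadratic inequalities to \eqref{eq:SP}, used in the proof of Lemma~\ref{th:LTI_approx}, we get $\rho_T e^{\pm j\theta_T}\in\mathcal{S}(\Pi)$. Writing $\|P_Ty\|^2$, $\langle P_Tu,P_Ty\rangle$ and $\|P_Tu\|^2$ in terms of $z=\rho e^{j\theta}$ turns the inequality into exactly $[\bar z;1]^\top\Pi[z;1]\|P_Tu\|^2\geq 0$. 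Hence
\[
\textup{SG}_e(\mathcal{R})\subseteq\bigcap_{\Pi\in\mathbf{\Pi}_{\succeq0}}\mathcal{S}(\Pi).
\]
The soft SG follows as well. For $u\in\mathcal{L}_2$ with $\mathcal{R}$ bounded we let $T\to\infty$; alternatively, we note that $\textup{SG}\subseteq\overline{\textup{SG}_e}$ and that the intersection is closed.

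\textbf{Step 4 (identifying the intersection; main obstacle).} It remains to show
\[
\bigcap_{\Pi\in\mathbf{\Pi}_{\succeq0}(\mathcal{R}_{\textup{BLS}},\mathbb{R})}\mathcal{S}(\Pi)\subseteq \textup{patch}(\overline{\textup{SG}}(\mathcal{R}_{\textup{BLS}})).
\]
The cited result \cite[Theorem 12]{Groot25} gives equality with $\overline{\textup{SG}}$ for the intersection over $\mathbf{\Pi}$, i.e.\ with $P$ unrestricted in sign. The difficulty is that requiring $P\succeq0$ may discard some certificates. I expect these to be exactly the exterior-disk constraints ($\sigma=+1$) that carve out bounded holes of the SG. For normal $\mathcal{G}$ the SG is generated by the Nyquist/eigenvalue loci via the hyperbolic convex hull. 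My plan is to check two things:
\begin{enumerate}[label=(\roman*)]
\item Every interior disk ($\sigma=-1$) containing $\overline{\textup{SG}}$ admits $P\succeq 0$. The supply $r^2\|u\|^2-\|y-\lambda u\|^2$ is a bounded-real inequality for the stable $\mathcal{G}-\lambda$, whose storage function is nonnegative by the KYP lemma with $x(0)=0$.
\item Every exterior disk ($\sigma=+1$) whose closed disk meets the unbounded component $(\overline{\textup{SG}}^c)_\infty$ also admits $P\succeq 0$. Such a constraint is a small-gain-type condition on the inverse, and its available storage from rest is nonnegative.
\end{enumerate}
Given these, any point $z\notin\textup{patch}(\overline{\textup{SG}})$ lies in the unbounded component. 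It can therefore be separated from $\overline{\textup{SG}}$ either by an interior disk or by an exterior disk lying in $(\overline{\textup{SG}}^c)_\infty$, using the disk-separation structure from the proof of \cite[Theorem 12]{Groot25}. Hence $z$ is excluded by some admissible $\mathcal{S}(\Pi)$. The nonnegativity argument in (ii) is the delicate part. It rests on the fact that the minimal storage with $x(0)=0$ is nonnegative whenever the dissipation inequality holds for all inputs from rest, i.e.\ whenever the hard SG of the BLS lies in $\mathcal{S}(\Pi)$.
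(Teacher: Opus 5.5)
Your Steps 1--3 are correct. They give a self-contained derivation of what the paper simply imports as \cite[Theorem 15]{Groot25}: every certificate $\Pi\in\mathbf{\Pi}_{\succeq 0}(\mathcal{R}_{\textup{BLS}},\mathbb{R})$ remains valid for $\mathcal{R}$. Your Schur-complement reading of \eqref{eq:RPR} as $x^\top R^\top PRx\le x^\top Px$ on $\mathcal{J}$ is the right one. The topological reduction in Step 4 and point (i) are also fine. Point (i) is a one-liner as in the paper: the $(1,1)$ block of \eqref{eq:KYP_LMI} with $\sigma=-1$ gives $A^\top P+PA\preceq -C^\top C\preceq 0$, so every solution is positive semidefinite because $A$ is Hurwitz.

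\textbf{The gap is point (ii).} It carries the whole content of the theorem, and your justification only restates it. By Lemma~\ref{th:LTI_approx} and Willems' available storage, ``$\Pi$ admits $P\succeq 0$'' is equivalent to the \emph{hard} inequality $\|P_T(y-\lambda u)\|\ge r\|P_Tu\|$ for all $u\in\mathcal{L}_{2e}^n$ and $T>0$, i.e.\ to $\textup{SG}_e(\mathcal{R}_{\textup{BLS}})\subseteq\mathcal{S}(\Pi)$. All you are given is the \emph{soft} inclusion $\overline{\textup{SG}}(\mathcal{R}_{\textup{BLS}})\subseteq\mathcal{S}(\Pi)$.

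For exterior disks these genuinely differ. The hard inequality requires $H-\lambda I$ to have a causal, stable inverse, and this is exactly what fails for hole-carving disks. For example, take $H(s)=1/(s+1)$ with $\lambda=r=\tfrac12$: the soft SG is the circle $|z-\tfrac12|=\tfrac12$, yet the only solution of \eqref{eq:KYP_LMI} is $P=-\tfrac12$. Your ``small-gain condition on the inverse'' remark points in the right direction, but you never show why a disk lying in $(\overline{\textup{SG}}^c)_\infty$ yields a stable inverse.

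The missing idea is the core of the paper's Case 2 and has two parts.
\begin{enumerate}
\item \emph{Algebraic step.} For $\sigma=+1$, the $(2,2)$ block forces $D-\lambda I$ to be invertible. A Schur complement of \eqref{eq:KYP_LMI} then gives $X^\top P+PX\preceq -PBW^{-1}B^\top P\preceq 0$ with $X=A-B(D-\lambda I)^{-1}C$, which is the state matrix of $(H-\lambda I)^{-1}$. Hence $P\succeq 0$ as soon as $X$ is Hurwitz.
\item \emph{Nyquist step.} The eigenloci of $H(j\omega)$ lie in $\overline{\textup{SG}}(H)$, so a center $\lambda$ in the unbounded component of the complement is neither on nor encircled by them. (The paper works instead with $\lambda\notin[p_0,p_1]$, the extreme real points of the loci.) By the generalized Nyquist criterion, $\det(H(s)-\lambda I)$ then has no zeros in the closed right half-plane. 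The modes of $X$ that are not poles of $(H-\lambda I)^{-1}$ are unobservable modes of $(A,C)$, on which $X$ acts as $A$, so they are stable. Therefore $X$ is Hurwitz.
\end{enumerate}
With this step supplied, either directly or as the ingredient that validates your hard-dissipativity route, your outline closes. Without it, Step 4 remains a conjecture.
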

    \vspace{0.2cm}
    
The proof of Theorem~\ref{th:reset_1} is provided in the Appendix.

\begin{corollary}\label{col:reset_1}
    The inclusions in Theorem \ref{th:reset_1} hold true for any reset system of the form \eqref{eq:RT} with $A$ Hurwitz, $(A,B)$ controllable, and $R=\alpha I$, $\alpha \in [-1,1]$. 
\end{corollary}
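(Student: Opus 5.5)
The plan is to reduce the corollary directly to Theorem~\ref{th:reset_1}. All hypotheses of that theorem other than the reset-map condition \eqref{eq:RPR} are either stated in the corollary ($A$ Hurwitz, $(A,B)$ controllable) or are standing assumptions on the base linear system. Normality of $\mathcal{R}_{\textup{BLS}}$ is not listed in the corollary, so I will take it as inherited from the theorem. Hence the only thing to establish is that, for $R=\alpha I$ with $\alpha\in[-1,1]$, inequality \eqref{eq:RPR} holds for every $P\in\mathbf{P}(\mathcal{R}_{\textup{BLS}},\mathbb{R})$ for some $\rho\geq 0$.

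I will take the crude choice $\rho=0$, which makes the flow/jump matrix $M$ irrelevant. I will also verify the inequality for every $P\in\mathbb{S}^m_{\succeq 0}$, which is a superset of $\mathbf{P}(\mathcal{R}_{\textup{BLS}},\mathbb{R})$; the particular structure of the LMI \eqref{eq:KYP_LMI} is then never needed. With $R=\alpha I$, symmetry of $P^{1/2}$ gives $R^\top P^{1/2}=P^{1/2}R=\alpha P^{1/2}$. The matrix in \eqref{eq:RPR} therefore becomes
\begin{equation*}
\begin{bmatrix} P & 0 & \alpha P^{1/2}\\ 0 & 0 & 0\\ \alpha P^{1/2} & 0 & I_m\end{bmatrix}.
\end{equation*}
The middle block row and column (of size $n$, corresponding to the input) are identically zero. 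Positive semidefiniteness is thus equivalent to that of the $2\times 2$ block matrix obtained by deleting them.

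For this reduced matrix the lower-right block $I_m$ is positive definite, so a Schur complement argument applies. The reduced matrix is positive semidefinite if and only if
\begin{equation*}
P-\alpha^2 P^{1/2}I_m^{-1}P^{1/2}=(1-\alpha^2)P\succeq 0.
\end{equation*}
This holds because $P\succeq 0$ and $|\alpha|\leq 1$. Equivalently, one can write the quadratic form as $\|P^{1/2}x+\alpha v\|^2+(1-\alpha^2)\|v\|^2$, taking care with the substitution so that the cross terms match. Thus \eqref{eq:RPR} holds with $\rho=0$, and Theorem~\ref{th:reset_1} yields both inclusions.

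No step is a real obstacle. The main care point is bookkeeping: the dimensions of the zero middle block, and confirming that choosing $\rho=0$ is admissible, since the theorem only requires $\rho\geq 0$. The result is then independent of $M$, and hence of the particular reset and flow sets and of the timer threshold $\delta$.
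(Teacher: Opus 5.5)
Your proof is correct and is the same argument as the paper's: set $\rho=0$, take the Schur complement with respect to $I_m$, and conclude from $(1-\alpha^2)P\succeq 0$. The paper writes this block as $(\alpha^2-1)P$, which is a sign slip, and your remark that normality of $\mathcal{R}_{\textup{BLS}}$ has to be carried over from Theorem~\ref{th:reset_1} is apt, since the paper's proof also relies on it without saying so.
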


\begin{proof}
    Take a Schur complement of \eqref{eq:RPR}. Choose $\rho=0$. The upper-left corner of the resulting matrix reduces to $(\alpha^2-1)P$. Since $P\succeq 0$ and $\alpha^2-1 \leq 0$, \eqref{eq:RPR} is  satisfied. 
\end{proof}

We remark that for reset systems with reset maps as in Corollary \ref{col:reset_1}, $\textup{patch}(\textup{SG}(\mathcal{R}_{BLS}))$ can be obtained without the need to solve LMIs. As $\textup{SG}(\mathcal{R}_{BLS})$ can be directly constructed from its transfer function, see, e.g., \cite{Chaffey23}, and $\textup{patch}(\textup{SG}(\mathcal{R}_{BLS}))$ is graphically intuitive to construct. Furthermore, $M$ can be freely chosen to optimize for performance.

The results in Theorem~\ref{th:reset_1} and Corollary~\ref{col:reset_1} reveal that certain reset systems inherit key properties of their BLS, in terms of their scaled graphs. These results, in particular Corollary~\ref{col:reset_1}, are a generalization of \cite[Proposition 1]{Carrasco10} which shows that a reset system with $R=0$ is input/output/strictly passive, if its BLS is input/output/strictly passive. The reset map structure considered in Corollary~\ref{col:reset_1} captures many known reset elements found in the literature such as first-order reset element (FORE) \cite{Nesic08}, and partial reset compensator \cite{Carrasco10, Dastjerdi23}. In \cite[Proposition 2]{Carrasco10} passivity inheritance from the BLS was shown to also extend to reset systems with $R\neq 0$, provided that a condition akin to \eqref{eq:RPR} holds. 

In the remainder, we refer to reset systems in \eqref{eq:RT} having a base linear system $\mathcal{R}_{\textup{BLS}}$ with $A$ Hurwitz, $(A,B)$ controllable, and $R,M$ satisfying \eqref{eq:RPR}, as \emph{admissible reset systems}.

\section{Stability analysis and design}\label{sec:analysis_design}

\subsection{Conditions for stability}
We present a simple stability test for a (nonlinear) feedback system containing an admissible reset element. 

\begin{theorem}\label{th:intres}
    Consider the feedback interconnection $\Sigma$ in Fig.~\ref{fig:FB}, with $H_1$ a causal, stable system, and $H_2$ an admissible reset system $\mathcal{R}$ with associated BLS $\mathcal{R}_{\textup{BLS}}$. Suppose the interconnection of $H_1$ and $\mu \mathcal{R}$ is well-posed for all $\mu \in (0,1]$. If there exists $r > 0$ such that for all $\mu \in (0,1]$,    
 \begin{equation}\label{eq:SGRESET2}
     \textup{dist}(\textup{SG}^\dagger(-H_1), \textup{patch}(\overline{{\textup{SG}}}(\mu\mathcal{R}_{\textup{BLS}}))) \geq r,
\end{equation}
then $\Sigma$ is stable. If, in addition, $\textup{SG}(H_1)$ or $\textup{patch}({\overline{\textup{SG}}}(\mathcal{R}_{\textup{BLS}}))$ or both satisfy the chord property, then $\|\Sigma\|\leq 1/r$. \hfill $\square$
\end{theorem}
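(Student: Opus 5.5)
The plan is to reduce Theorem~\ref{th:intres} to Theorem~\ref{thm:stab-combined} with $H_2=\mathcal{R}$, using Theorem~\ref{th:reset_1} to replace $\textup{SG}(\mu\mathcal{R})$ by the patched SG of the scaled BLS. Three ingredients are needed: (i) the standing hypotheses of Theorem~\ref{thm:stab-combined}, namely causality and stability of both subsystems and well-posedness for all $\mu\in(0,1]$; (ii) for every $\mu\in(0,1]$, the inclusion $\textup{SG}(\mu\mathcal{R})\subseteq\textup{patch}(\overline{\textup{SG}}(\mu\mathcal{R}_{\textup{BLS}}))$; and (iii) the chord property transferring to $\textup{SG}(\mathcal{R})$ when it is assumed for the patched BLS graph.

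For (i), $H_1$ is causal and stable by assumption, and well-posedness is assumed directly. Causality of $\mathcal{R}$ is immediate from \eqref{eq:RT}, since the flow and jump logic only uses past and present $(x,u,\tau)$. Stability of $\mathcal{R}$ needs an argument, and I would take it from Theorem~\ref{th:reset_1}. Since $A$ is Hurwitz, $\overline{\textup{SG}}(\mathcal{R}_{\textup{BLS}})$ is bounded by $\|\mathcal{R}_{\textup{BLS}}\|$, and hence so is its patch. The hard-SG inclusion then gives $\|P_Ty\|\le c\|P_Tu\|$ for all $T$, which, letting $T\to\infty$, yields $y\in\mathcal{L}_2$ with finite gain for $u\in\mathcal{L}_2$.

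For (ii), note that $\mu\mathcal{R}$ is again a reset system of the form \eqref{eq:RT}, with data $(A,B,\mu C,\mu D)$ and the same $R$, $M$ and $\delta$. Its BLS is $\mu\mathcal{R}_{\textup{BLS}}$, which is normal, Hurwitz and controllable whenever $\mathcal{R}_{\textup{BLS}}$ is. The main obstacle is checking \eqref{eq:RPR} for the scaled system, because the set $\mathbf{P}(\mu\mathcal{R}_{\textup{BLS}},\mathbb{R})$ changes with $\mu$. To handle this, I would observe that $\Theta(\Pi)$ built from $(\mu C,\mu D)$ equals $\Theta(\tilde\Pi)$ built from $(C,D)$, where $\tilde\Pi=\textup{diag}(\mu,1)\,\Pi\,\textup{diag}(\mu,1)$. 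This $\tilde\Pi$ is again of the form \eqref{eq:PIS}, with $\lambda/\mu$, $r/\mu$ and scaling $\sigma\mu^2$, and positive scaling can be absorbed into $P$ and $\rho$. Consequently $\mathbf{P}(\mu\mathcal{R}_{\textup{BLS}},\mathbb{R})$ consists of positive multiples of elements of $\mathbf{P}(\mathcal{R}_{\textup{BLS}},\mathbb{R})$. Condition \eqref{eq:RPR} is invariant under $P\mapsto cP$, $\rho\mapsto c\rho$ after a congruence that rescales the last block. Admissibility is therefore preserved, and Theorem~\ref{th:reset_1} gives the inclusion for every $\mu$.

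With (ii) in hand, the distance condition of Theorem~\ref{thm:stab-combined} follows from monotonicity of the distance under set inclusion:
\[
\textup{dist}(\textup{SG}^\dagger(-H_1),\textup{SG}(\mu\mathcal{R}))\ge \textup{dist}(\textup{SG}^\dagger(-H_1),\textup{patch}(\overline{\textup{SG}}(\mu\mathcal{R}_{\textup{BLS}})))\ge r.
\]
Stability of $\Sigma$ then follows from Theorem~\ref{thm:stab-combined}.

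For (iii), if $\textup{SG}(H_1)$ has the chord property, Theorem~\ref{thm:stab-combined} applies directly and gives $\|\Sigma\|\le 1/r$. If instead only the patched BLS graph has it, $\textup{SG}(\mathcal{R})$ itself need not satisfy it. In that case I would replace $\mathcal{R}$ by a (multi-valued) system $\tilde H_2$ whose soft SG equals the patched set. One option is the set-valued union of $\mathcal{R}$ with all maps $u\mapsto z\,u$ realized through real $2\times2$ rotation-scaling blocks on signal pairs. This $\tilde H_2$ contains $\mathcal{R}$, satisfies the same distance bound, and has the chord property. Since $\tilde H_2$ is an extension of $\mathcal{R}$, its closed-loop gain bounds that of $\Sigma$, which gives $\|\Sigma\|\le 1/r$.

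If this relaxation argument proves delicate, particularly around well-posedness of $\tilde H_2$, I would instead reopen the proof of Theorem~\ref{thm:stab-combined} from \cite{Chen25}. The chord property enters there only through the geometry of the separating set, so the argument should go through with the patched graph as an outer bound of $\textup{SG}(\mathcal{R})$.
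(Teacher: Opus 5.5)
Your proposal follows the paper's route. The paper's proof only invokes Theorem~\ref{th:reset_1} to get $\textup{SG}(\mathcal{R})\subseteq\operatorname{patch}(\overline{\textup{SG}}(\mathcal{R}_{\textup{BLS}}))$ and then applies Theorem~\ref{thm:stab-combined}. It passes silently over three points: stability of $\mathcal{R}$, the inclusion for every $\mu\in(0,1]$, and the fact that the chord property is assumed for the patched set rather than for $\textup{SG}(\mathcal{R})$.

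Your steps (i) and (ii) handle the first two points correctly. Indeed $\mathbf{P}(\mu\mathcal{R}_{\textup{BLS}},\mathbb{R})=\mu^2\,\mathbf{P}(\mathcal{R}_{\textup{BLS}},\mathbb{R})$, and $\rho\mapsto\mu^2\rho$ works in \eqref{eq:RPR}. However, (ii) is more work than needed. Since $\rho(u,\mu y)=\mu\rho(u,y)$ and $\theta(u,\mu y)=\theta(u,y)$ for $\mu>0$, one has $\textup{SG}(\mu H)=\mu\,\textup{SG}(H)$ for any system. Moreover, $\operatorname{patch}$ commutes with the homeomorphism $z\mapsto\mu z$. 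So the $\mu=1$ inclusion simply rescales, with no need to revisit \eqref{eq:RPR}.

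The primary construction in (iii) does not work in general. Suppose a causal linear $Z$ sends every nonzero input to the SG point $z$ or $\bar z$, with $z=a+jb$ and $b\neq 0$. Then $\langle u,Zu\rangle=a\|u\|^2$ for all $u$, so $K=Z-aI$ is causal and skew-adjoint. It is therefore also anticausal, so it commutes with every $P_T$ and acts as multiplication by a real skew-symmetric $K(t)$. The condition $\|Zu\|=|z|\|u\|$ then forces $K(t)^\top K(t)=b^2I_n$. This is impossible for odd $n$, since odd-size skew-symmetric matrices are singular. In particular it fails in the SISO case $n=1$ of the paper's example, so your rotation--scaling blocks exist only when channels can be paired.

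Well-posedness, which you flagged as the delicate point, is actually harmless. The paper's definition only asks for existence of loop solutions, and $\tilde H_2\supseteq\mathcal{R}$ inherits these from $\mathcal{R}$.

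Your fallback is the argument that actually closes the proof, and it should be your main one. In the standard proof of the gain bound (via the SG sum rule of \cite{Chaffey23} applied to $\Sigma^{-1}=H_1^{-1}+H_2$), the chord property on the $H_2$ side is used pointwise only. For each signal pair, it requires only that the chord $[z_2,\bar z_2]$ of the individual point $z_2\in\textup{SG}(H_2)$ lie in the separated set. Since $z_2\in\textup{SG}(\mathcal{R})\subseteq\operatorname{patch}(\overline{\textup{SG}}(\mathcal{R}_{\textup{BLS}}))$ and the latter is assumed chord-closed, that chord stays in the patched set. The distance bound then yields $\|\Sigma\|\le 1/r$. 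This outer-bound version of Theorem~\ref{thm:stab-combined} is exactly what the paper relies on tacitly.
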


\begin{proof}
   Since  $\mathcal{R}$ is an admissible reset system, Theorem~\ref{th:reset_1} applies and thus $\textup{SG}(\mathcal{R}) \subseteq\textup{patch}({\overline{\textup{SG}}}(\mathcal{R}_{\textup{BLS}}))$. The result then follows from applying Theorem~\ref{thm:stab-combined}. 
\end{proof}

We make the following remarks on Theorem~\ref{th:intres}:
\begin{enumerate}
\item As for Theorem~\ref{thm:stab-combined}, a version of Theorem~\ref{th:intres} can be formulated with $\textup{SG}_e^\dagger(-H_1)$, allowing for, e.g., integrators. 
\item When $H_1$ is LTI, Theorem~\ref{th:intres} shows that the closed-loop reset  system is stable if the simpler underlying closed-loop  LTI system is stable according to SG analysis. 
\item In \eqref{eq:SGRESET2}, $\textup{patch}({\overline{\textup{SG}}}(\mathcal{R}_{\textup{BLS}}))$ can be replaced by its over-approximation $\bigcap_{\Pi \in \mathbf{\Pi}_{\succeq 0}(\mathcal{R}_{\textup{BLS}},\Lambda)}\mathcal{S}(\Pi)$ with $\Lambda$ containing a finite number of points.
\item The conditions in Theorem~\ref{th:intres} are {independent of $M$} and $\delta$, and pose limited structure on $R$. This is in contrast with frequency-domain results such as the circle-criterion \cite{Loon17} and the $H_\beta$-condition \cite{Beker04}, where the structure of $M$ and $R$ is exploited. In the circle-criterion, $M$ represents a sector, while $R=0$ is key for the $H_\beta$-condition. 
\end{enumerate}

\subsection{Design for stability}\label{sec:design}
We exploit the analysis conditions in Theorem~\ref{th:intres} in a procedure to design a reset controller that stabilizes the loop in Fig.~\ref{fig:FB} (with $H_1$ the plant, and $H_2=\mathcal{R}$ the reset controller). We assume that either $\textup{SG}(H_1)$ or $\textup{SG}_e(H_1)$ (or both) are given. 
 
\begin{enumerate}[label=\textbf{Step~\arabic*:}, leftmargin=*, align=left]
\item Design a base LTI controller with transfer function $\mathcal{R}_{\textup{BLS}}(s) = C(sI-A)^{-1}B+D$ that satisfies \eqref{eq:SGRESET2}.

\item Choose $\Lambda$ and obtain an over-approximation of $\textup{patch}(\overline{\textup{SG}}(\mathcal{R}_{\textup{BLS}}))$ by solving \eqref{eq:KYP_LMI} for each $\lambda \in \Lambda$; collect all corresponding matrices $P$ in $\mathbf{P}(\mathcal{R}_{\textup{BLS}},\Lambda)$. 

\item Solve \eqref{eq:RPR} for all $P\in \mathbf{P}(\mathcal{R}_{\textup{BLS}},\Lambda)$ with common decision variables $\rho$, $R$ and $M$. 
\end{enumerate}
The above steps lead to an admissible reset controller $\mathcal{R} = (A,B,C,D,R,M,\delta)$, where $\delta\geq 0$ can be picked arbitrary, that satisfies the conditions of Theorem~\ref{th:intres} and thus results in a stable closed-loop design. We highlight the following points:
\begin{enumerate}
    \item Design of the BLS in \textbf{Step 1} can be done via, e.g., loop-shaping \cite{Skogestad05}. For example, lead filters can rotate $\textup{SG}(\mathcal{R}_{\textup{BLS}})$ to the right half complex plane, and a proportional gain can scale $\textup{SG}(\mathcal{R}_{\textup{BLS}})$ to an appropriate size. 
    \item The choice for $\Lambda$ in \textbf{Step 2} determines the accuracy of the over-approximation of $\operatorname{patch}(\overline{\textup{SG}}(\mathcal{R}_{\textup{BLS}}))$.
    \item The reset structure is synthesized in \textbf{Step 3}. In \eqref{eq:RPR}, structure can be given to $R$ and $M$. For example, partial resetting can be enforced via, e.g., $R = \textup{diag}(R_{11}, I_{m-p})$, with $R_{11}$ a free matrix. The structure of $M$ for the single-input single-output case can, for example, be fixed to \begin{equation}
       M=\begin{bmatrix}
           C & D\\
           0 & I_n
       \end{bmatrix}^\top \begin{bmatrix}
            -k_1 & 1 \\
            1 & k_2
        \end{bmatrix}\begin{bmatrix}
            C & D\\
            0 & I_n
        \end{bmatrix} \label{eq:M_structure}
    \end{equation} with $k_1>k_2$, which induces a conic partition of the input-output space of a reset controller. 
\end{enumerate}

\section{Example}
Consider the negative feedback interconnection in Fig.~\ref{fig:FB} where $H_1$ is an LTI system with transfer function $H_1(s) = 1/(s(s+0.2))$ describing a moving mass with friction, and $H_2$ a to-be-designed reset controller. Note that $H_1$ contains an integrator and maps $\mathcal{L}_{2e}$ to $\mathcal{L}_{2e}$.  Following \textbf{Step 1}, we design $\mathcal{R}_{\textup{BLS}}(s)= (0.055/({s^2+1 s+1}))+0.1$ to satisfy \eqref{eq:SGRESET2}, see  Fig.~\ref{fig:example_1} (left). From \textbf{Step 2} we obtain $\mathbf{P}(\mathcal{R}_{\textup{BLS}},\Lambda)$ with $\Lambda =\left\{-1,-1.01,\dots,1\right\}$. For \textbf{Step 3}, we consider $R$ a free variable and let $M$ be as in \eqref{eq:M_structure}. We iterate over $k_1$ and set $k_2=0$. This results in combinations $(R, k_1)$ that satisfy \eqref{eq:RPR}, which offers a parameter space that we can explore for performance with stability guaranteed. We run closed-loop simulations for all feasible $(R,k_1)$ and select the pair that minimizes $\|u_1\|$ in response to a step input $w$, resulting in $R=0$ and $k_1=6.2$. 

    Step-responses of both the BLS and reset control system are shown in Fig.~\ref{fig:example_1} (right). Overshoot and settling time are reduced, illustrating the potential of our design procedure.
    \begin{figure}[h]
	\centering
	\begin{subfigure}[t]{0.4\linewidth}
\centering
        \includegraphics[height=3.3cm]{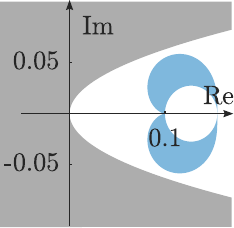}

         \label{fig:5}
\end{subfigure}
\begin{subfigure}[t]{0.55\linewidth}
\centering
    \includegraphics[height=3.3cm]{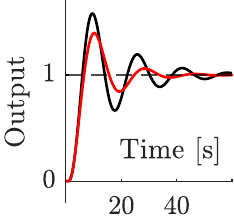}
    \label{fig:_6}
    \end{subfigure}
	\caption{Left: $\textup{SG}_e^\dagger(-H_1)$ (grey) and ${\textup{SG}}(\mathcal{R}_{\textup{BLS}})$ (blue). Right: step-response $y_1$ of BLS (black) and reset system (red).}
	\label{fig:example_1}
\end{figure}

\section{Conclusions}\label{sec:conclusions}
We established that a broad class of reset control systems inherit scaled graph properties from their base linear system, i.e., the underlying linear system without resets. This result is exploited  for the analysis and design of closed-loop reset control systems with (and without) time-regularization. Due to this key result, closed-loop stability analysis reduces to verifying scaled graph properties of the base linear system, and design involves shaping the base linear controller characteristics. The results in this paper provide new possibilities and tools in loop-shaping techniques for reset control design.  

\appendices

\section*{Appendix: Proof of Theorem \ref{th:reset_1}}
\noindent Let $H$ denote a normal LTI system as in \eqref{eq:LTI} with $A$ Hurwitz and $(A,B)$ controllable. By definition of $\operatorname{patch}(\overline{\textup{SG}}(H))$, we have to show that
   \begin{align*} 
\overline{\textup{SG}}(H)\cup \left((\overline{\textup{SG}}(H))^c\setminus ((\overline{\textup{SG}}(H))^c)_{\infty}\right)=\!\!\!\!\!\bigcap_{\Pi \in \mathbf{\Pi}_{\succeq 0}(H,\mathbb{R})} \mathcal{S}(\Pi),
\end{align*} 
which is well defined, since $H$ is bounded and thus $\overline{\textup{SG}}(H)$ is bounded. First, recall from \cite[Theorem 12]{Groot25} the equality \begin{equation}\label{eq:SG0}\overline{\textup{SG}}(H) = \bigcap_{\Pi \in \mathbf{\Pi}(H,\mathbb{R})}\mathcal{S}(\Pi),\end{equation}
with $\mathcal{S}(\Pi)$ in \eqref{eq:SP} and $\mathbf{\Pi}(H,\mathbb{R})$ in \eqref{eq:PIS}. 

Next, we will make a decomposition of $\overline{\textup{SG}}(H)$ based on regions characterized by matrices $\Pi\in \mathbf{\Pi}(H,\mathbb{R})$ with specific properties. We consider two cases: $\Pi\in \mathbf{\Pi}(H,\mathbb{R})$ with $\sigma=-1$ (i.e., $\Pi$ describes the interior of a disk) and $\Pi\in \mathbf{\Pi}(H,\mathbb{R})$ with $\sigma = +1$ (i.e., $\Pi$ describes the exterior of a disk).

\textbf{Case 1: $\Pi \in \mathbf{\Pi}(H,\mathbb{R})$ with $\sigma = -1$.} In this case, it follows that \eqref{eq:KYP_LMI} implies $A^\top P+PA +C^\top C \preceq 0$, which, since $C^\top C\succeq 0$, implies $A^\top P+PA\preceq 0$. Since, $A$ is Hurwitz, it follows that $P\succeq 0$. Therefore, those $\Pi \in \mathbf{\Pi}(H,\mathbb{R})$ with $\sigma = -1$  belong to $\mathbf{\Pi}_{\succeq 0}(H,\mathbb{R})$. We collect these matrices in the set
\begin{equation}
    \Sigma^{-1}_{\succeq 0}(H):=\left\{\Pi \in \mathbf{\Pi}(H,\mathbb{R})\mid \sigma = -1\right\},
\end{equation}
and define the corresponding region
\begin{equation}
    S_1(H) = \bigcap_{\Pi \in \Sigma_{\succeq 0}^{-1}(H)}\mathcal{S}(\Pi).
\end{equation}

\textbf{Case 2: $\Pi \!\in \!\mathbf{\Pi}(H,\!\mathbb{R})$ with $\sigma\! = \!+1$.} In this case, \eqref{eq:KYP_LMI} implies
\begin{equation}\label{eq:LMI2}
    \begin{bmatrix}
        A^\top P+PA-C^\top C & PB-C^\top(D-\lambda I) \\
        B^\top P-(D-\lambda I)^\top C & F
    \end{bmatrix}\preceq 0,
\end{equation}
with $F = -(D-\lambda I)^\top (D-\lambda I)+r^2I$. Since $r>0$, $F\preceq 0$ implies $(D-\lambda I)^\top (D-\lambda I)\succ 0$ and thus the matrix $D-\lambda I$ is non-singular. Taking the Schur complement of \eqref{eq:LMI2} (dropping $r^2I\succ 0$ in the left-hand side of \eqref{eq:LMI2}) leads to
\begin{equation}\label{eq:SCHUR}
    A^\top P+PA-C^\top C +V W^{-1}V^\top \preceq 0,
\end{equation}
where $V = PB-C^\top (D-\lambda I)$ and $W  = (D-\lambda I)^\top (D-\lambda I)$. We then have
\begin{equation*}
    -C^\top C+VW^{-1}V^\top = 
    PBW^{-1}B^\top P - \textup{He}(PB(D-\lambda I)^{-1}C),
\end{equation*}
where $\textup{He}(M):=M+M^\top$, and we have used $M(M^\top M)^{-1}M^\top = I$ for any square and invertible $M$. Since $W^{-1}\succ 0$, we find $PBW^{-1}B^\top P\succeq 0$ such that \eqref{eq:SCHUR} implies
\begin{equation}\label{eq:XP}
    X^\top P+PX \preceq 0,
\end{equation}
with $X = A-B(D-\lambda I)^{-1} C$. For $\lambda \neq 0$, $X$ can be regarded as the system matrix that arises when applying  the feedback $u = \tfrac{1}{\lambda}y$ to the LTI system $\dot{x}=Ax+Bu$, $y=Cx+Du$. The case $\lambda \to  0$ can be interpreted as the limit of high-gain feedback, where the gain $\tfrac{1}{\lambda} \to \pm \infty$. Using this feedback perspective, the closed-loop system dynamics read $\dot{x}=Xx=(A-\tfrac{1}{\lambda}B(-\tfrac{1}{\lambda}D+I)^{-1}C)x$. Hence, when $X$ is Hurwitz, \eqref{eq:XP} implies $P\succeq0$. Next, we find those $\lambda$ for which this is true. 

Using the previous perspective and the fact that $A$ is Hurwitz, it follows from the generalized Nyquist criterion \cite{Skogestad05} that $X$ is Hurwitz if and only if $\det(-\tfrac{1}{\lambda} H(s)+I)\neq 0$ and does not encircle the origin, where $H(s)=C(sI-A)^{-1}B+D$. This is equivalent to requiring the characteristic loci of $H(s)$ to not pass through nor encircle the point $\lambda+0j$. Note that the characteristic loci correspond to the spectrum $\rho(H)$ defined as
      \begin{equation*}
            \rho(H) = \!\!\!\!\!\!\bigcup_{s \in \mathbb{R}\cup\{\infty\}}\!\left\{ \alpha \!\in\! \mathbb{C} \mid  \det\!\left(\lim_{\omega\rightarrow s}\!H(j\omega)-\alpha I\right)\!=\!0,\omega\!\in\!\mathbb{R} \right\},
    \end{equation*}
which is bounded since $H$ is stable. Furthermore, it holds that $\overline{\textup{SG}}(H)\cap \mathbb{R} = \rho(H)\cap \mathbb{R}$, which follows from hyperbolic convexity of $\textup{SG}(H)$, see \cite{Chaffey23}. Therefore, the critical points for stability are $p_0:= \min\{\rho(H)\cap\mathbb{R}\}$ and $p_1:=\max\{\rho(H)\cap\mathbb{R}\}$.

Suppose $\lambda < 0$. In that case stability requires $-p_0/\lambda > -1$, leading to $p_0 >\lambda$. For $\lambda >0$ we need $-p_1 /\lambda >-1$, leading to $p_1 <\lambda$. Note that for $\lambda = 0$, we have $\tfrac{1}{\lambda}\to\pm \infty$, so that $X$ is Hurwitz if $p_0 >0$ (i.e., $H$ has infinite positive gain margin) or $p_1 <0$ (i.e., $H$ has infinite negative gain margin). Let $\Lambda := [p_0,p_1]$. Then, $X$ is Hurwitz for all $\lambda \in \mathbb{R}\setminus \Lambda$, and, therefore, \eqref{eq:XP} implies $P\succeq0$. Thus, those $\Pi\in \mathbf{\Pi}(H,\mathbb{R})$ with $\sigma = +1$ and $\lambda \in \mathbb{R}\setminus \Lambda$ also belong to $\mathbf{\Pi}_{\succeq0}(H,\mathbb{R})$.

% ==========================================
% ================= New proof bit ===========
% ==============================================
In the case that $\lambda \in \Lambda$, $X$ is not Hurwitz, and thus \eqref{eq:XP} does not directly imply anything on $P$. However, we can exclude $P\succeq 0$ using the following argument by contradiction. First, we note the following. Take $y \in \mathbb{C}^n \neq 0$ such that
\begin{equation}
    y^H (X^\top P+PX)y = 0,
\end{equation}
with $y^H$ the hermitian transpose. We have
\begin{equation}
    y^H (A^\top P+PA)y -y^\top \textup{He}(PB(D-\lambda I)^{-1}C)y = 0.
\end{equation}
Via \eqref{eq:SCHUR} we find $y^H PBW^{-1}B^\top Py \leq 0$. Since $W^{-1}\succ 0$ this yields $y^H PBW^{-1}B^\top Py = 0$ and $B^\top Py = 0$. We then have $ y^H (A^\top P+PA-C^\top C)y = -y^H VW^{-1}V^\top y$.
Now consider \eqref{eq:LMI2} and pre- and post multiply with $[y^\top,z^\top]^\top$ with $z \in  \mathbb{C}^m$ arbitrary.
Using $V = PB-Q^\top$, with $Q:= (D-\lambda I)^\top C$ and $B^\top Py=0$, we find
\begin{equation}
    -y^H Q^H W^{-1}Qy - 2y^H Q^\top z -z^H Wz +r^2 z^H z \leq 0.
\end{equation}
Choose $z = -W^{-1}Qy$ such that the above inequality reduces to  $r^2 z^H z \leq 0$, which implies $z=0$. Since the matrix $D-\lambda I$ is invertible, it follows that $z=0 \Leftrightarrow Cy=0$. In summary: $y^H(X^\top P+PX)y=0 \implies Cy = 0$.

Next, let $X$ have non-zero eigenvalues, and let \eqref{eq:XP} hold. Assume $P\succeq 0$. Let $v\neq 0$ be an eigenvector of $X$ such that $Xv = av$ with  $\textup{Re}\{a\}>0$. Then, we have
\begin{equation}
    v^H Pv \geq 0\:\:\textup{  and  }\:\: 2\textup{Re}\{a\}v^H Pv \leq 0,
\end{equation}
where the second inequality follows from \eqref{eq:XP}. Since $\textup{Re}\{a\}>0$ it follows that $v^H Pv = 0$. This implies $v^H(X^\top P+PX)v = 0$, and thus, as shown previously $Cv=0$. Recall that $X = A-B(D-\lambda I)^{-1}C$. Then $Xv = Av = av$, which implies that $v$ is an eigenvector of $A$ corresponding to the unstable eigenvalue $a$. Since $A$ is Hurwitz this is a contradiction and our assumptions must be false: either \eqref{eq:LMI2} does not have a solution, or if \eqref{eq:LMI2} admits a solution, $P\succeq 0$ can not be true.

Next, suppose $X$ has zero eigenvalues. Then, $Xv = 0$, and thus $v^H(X^\top P+PX)v = 0$, which implies $Cv = 0$. However, $Xv = 0 = Av$ contradicts the fact that $A$ is Hurwitz. Hence, in this case \eqref{eq:LMI2} can not admit a feasible solution. Thus, for $\lambda\in\Lambda$ the matrix $X$ is not Hurwitz, and any $P$ satisfying \eqref{eq:LMI2} cannot be positive semi-definite.

Next, we collect the $\Pi$ matrices for $\sigma=+1$, in the sets
\begin{subequations}
\begin{align}
    \Sigma^{+1}_{\succeq 0}(H)&:=\left\{\Pi \in \mathbf{\Pi}(H,\mathbb{R}\setminus \Lambda)\mid \sigma = +1 \right\} \\
    \Sigma^{+1}_{\times}(H)&:=\left\{\Pi \in \mathbf{\Pi}(H,\Lambda)\mid \sigma = +1, \right\},
    \end{align}
\end{subequations}
and define the corresponding regions
\begin{equation*}
     S_2(H) =\!\!\!\!\!\! \bigcap_{\Pi \in \Sigma_{\succeq 0}^{+1}(H)}  \!\!\!\!\!\!  \mathcal{S}(\Pi)\; \textup{ and } \;\;S_3(H)=\!\!\!\!\!\! \bigcap_{\Pi \in  \Sigma^{+1}_{\times}(H)} \!\!\!\!\!\! \mathcal{S}(\Pi). 
\end{equation*}
Since $\Sigma_{\succeq 0}^{-1}(H) \cup \Sigma_{\succeq 0}^{+1}(H)\cup \Sigma^{+1}_{\times}(H) = \mathbf{\Pi}(H,\mathbb{R})$, it follows from \eqref{eq:SG0} that
\begin{equation}
    \overline{\textup{SG}}(H)=S_1(H)\cap S_2(H)\cap S_3(H).
\end{equation}
Using De Morgan's law we then find $\overline{\textup{SG}}(H)^c = {S}_1(H)^c\cup {S}_2(H)^c\cup {S}_3(H)^c$. Clearly, $S_1(H)$ is a bounded set as it results from taking intersections, with at least one bounded disk with radius $\|H\|$. This directly implies that $S_1(H)^c$ is unbounded. Both ${S}_2(H)$ and ${S}_3(H)$ are unbounded sets, since  $\{z\in \mathbb{C} \mid p_0\leq\textup{Re}(z) \leq p_1 \}\subset S_2(H)$ and $\mathbb{R} \setminus \Lambda \subset S_3(H)$.  Furthermore,  $S_2(H)^c$ is unbounded, as it results from the union of disks centered at $\lambda \in \mathbb{R}\setminus \Lambda$, which is unbounded.  $S_3(H)^c$ results from the union of disks centered at $\lambda\in\Lambda$, which, since $\Lambda$ is bounded and $S_3(H)^c$ being bounded by the spectrum $\rho(H)$ implies that $S_3(H)^c$ is bounded. Thus, $( \overline{\textup{SG}}(H)^c)_\infty = {S}_1(H)^c \cup S_2(H)^c$. We then find
\begin{equation*}
\begin{split}
    \operatorname{patch}(\overline{\textup{SG}}(H)) &=(S_1(H)\cap S_2(H)\cap S_3(H)) \cup S_3(H)^c\\
& = S_1(H)\cap S_2(H),
    \end{split}
\end{equation*}

where we used $S_3(H)^c \subseteq S_1(H)\cap S_2(H)$, which is evident from the construction of the regions. Since $\Sigma_{\succeq 0}^{-1}(H) \cup \Sigma_{\succeq 0}^{+1}(H)  = \mathbf{\Pi}_{\succeq 0}(H,\mathbb{R}),$ it follows that
\begin{equation}\label{eq:Patch}
    \operatorname{patch}(\overline{\textup{SG}}(H))= S_1(H)\cap S_2(H) =\!\!\!\! \bigcap_{\Pi \in \mathbf{\Pi}_{\succeq 0}(H,\mathbb{R})} \!\!\!\mathcal{S}(\Pi).
\end{equation}
Finally, we use \cite[Theorem 15]{Groot25} and $H=\mathcal{R}_{BLS}$ to conclude 
\begin{equation*}
    \textup{SG}(\mathcal{R}) \subseteq \!\!\! \!\!\!\! \bigcap_{\Pi \in \mathbf{\Pi}_{\succeq 0}(\mathcal{R}_{\textup{BLS}},\mathbb{R})} \!\!\!\!\!\!\mathcal{S}(\Pi) \quad \textup{and} \quad \textup{SG}_e(\mathcal{R}) \subseteq \!\!\! \!\!\!\! \bigcap_{\Pi \in \mathbf{\Pi}_{\succeq 0}(\mathcal{R}_{\textup{BLS}},\mathbb{R})}\!\!\!\!\!\!\mathcal{S}(\Pi).
\end{equation*}
Applying the equality in \eqref{eq:Patch} leads to the result.

\end{document}